\setlist[enumerate]{itemsep=2pt,topsep=3pt}
\setlist[itemize]{itemsep=2pt,topsep=3pt}
\setlist[enumerate,1]{label=(\alph*)}
\renewcommand{\phi}{\varphi}
\renewcommand{\epsilon}{\varepsilon}
\renewcommand{\leq}{\leqslant}
\renewcommand{\geq}{\geqslant}
\newcommand{\st}{\ensuremath{\ \mathrm{s.t.}\ }}
\newcommand{\setntn}[2]{ \{ #1 : #2 \} }
\newcommand{\1}{\mathbbm 1}
\newcommand{\given}{\,|\,}
\newcommand{\RR}{\mathbbm R}
\newcommand{\NN}{\mathbbm N}
\newcommand{\JJ}{\mathbbm J}
\theoremstyle{plain}
\newtheorem{theorem}{Theorem}[section]
\newtheorem{lemma}[theorem]{Lemma}
\theoremstyle{definition}
\newtheorem{remark}{Remark}[section]
\begin{document}

%\setstcolor{red}

\title{}

\begin{center}
  \Large{Systemic Risk in Financial Systems: Properties of Equilibria}

    \vspace{1em}

  \large
  John Stachurski\footnote{Email: \texttt{john.stachurski@anu.edu.au}.  The
  author thanks Shu Hu for research assistance.
  Financial support from ARC grant FT160100423 is gratefully acknowledged.}

  \par \bigskip

  \small
  Australian National University \bigskip

  \normalsize
  \today
\end{center}

\begin{abstract}
    \vspace{1em}
Eisenbern and Noe (2001) analyze systemic risk for financial institutions linked by a network of liabilities.  They show that the solution to their model is unique when the financial system is satisfies a regularity condition involving risk orbits.  We show that this condition is not needed: a unique solution always exists.  

    \vspace{1em}
    \noindent
    %\textit{JEL Classifications:} to be added \\
    \textit{Keywords:} Systemic risk, uniqueness
\end{abstract}

\maketitle

%section
\section{Introduction}

\cite{eisenberg2001systemic} analyzed systemic
risk for institutions linked by a complex network of liabilities.  The
fundamental importance of their work was brought to the fore by the financial crisis of
2007--2008, where the solvency of individual firms became unclear due to
interlocking financial obligations. In such settings, 
defaults can trigger negative
feedback loops, where restructuring compromises the balance sheets of other
firms in the network, restricting their ability to make payments to debt
holders, and forces more rounds of restructuring. 

Because of these cyclical interactions in the networks,
obtaining a system of payments that clears the market is a fixed point
problem.  While existence of a fixed
point is easy to show, uniqueness is 
challenging.  \cite{eisenberg2001systemic} show that uniqueness holds when the
financial system is ``regular.''  This condition has become common in the
literature (see, e.g., \cite{cifuentes2005liquidity} or
\cite{feinstein2018sensitivity}).
We prove it is not needed: the limited liability and
priority conditions in \cite{eisenberg2001systemic} are themselves enough to uniquely identify an
equilibrium clearing vector.\footnote{The
    only exception is an extreme case where no
firm in the entire network has any cash at all.  This case is discussed in
Remark~\ref{r:extr} below.}

\section{Results}

If $a = (a(i))_{i=1}^n$ and $b = (b(i))_{i=1}^n$ are vectors in
$\RR^n$, then $a \leq b$ means $a(i) \leq b(i)$ for all $i$.  We write $a \ll
b$ if $a(i) < b(i)$ for all $i$.  The symbol $[a,b]$ represents the order
interval $\setntn{x \in \RR^n}{a \leq x \leq b}$.  A Markov matrix is a
nonnegative square matrix with unit row sums.  For a given Markov matrix
$\Pi$, we call $j$ \emph{accessible} from $i$ under $\Pi$ if
either $j=i$ or $\Pi^k(i, j) > 0$ for some $k \in \NN$.

As in \cite{eisenberg2001systemic}, a \emph{financial system} is a set
of nodes $I := \{1, 2, \ldots, n\}$, an $n \times n$ matrix $\Pi = (\Pi(i,
j))$ of relative liabilities, a vector $\bar p = (\bar p (i)) \in \RR^n_+$ of
nominal obligations ($\bar p(i)$ is the sum of all nominal obligations held by
node $i$) and a vector $e = (e(i)) \in \RR^n_+$ of external cash flows.  The
matrix of relative liabilities is a Markov matrix.  Following
\cite{eisenberg2001systemic}, we assume that $\bar p \gg 0$, so that, for any
given node, the total sum of liabilities to other nodes is not
zero.\footnote{This is required for $\Pi$ to have unit row sums.  See p.~239
of \cite{eisenberg2001systemic}.}

A \emph{clearing vector}
for a financial system $(I, \Pi, \bar p, e)$ is a vector of payments $p \in [0, \bar p]$
satisfying the limited liability restriction
\begin{equation}\label{eq:limli}
    p(j) \leq \sum_{i \in I} p(i) \Pi(i, j) + e(j)
\end{equation}
and the absolute priority condition
\begin{equation}\label{eq:ap}
    p(j) = \sum_{i \in I} p(i) \Pi(i, j) + e(j)
    \quad \text{ or }  \quad
    p(j) = \bar p(j)
\end{equation}
for all $j \in I$.  Combining these two restrictions and writing them in
vector form (cf.\ \cite{eisenberg2001systemic}, p.~240), the set of clearing
payment vectors is seen to coincide with the set of fixed points of the
mapping $p \mapsto \Phi p$ on $[0, \bar p]$ defined by 
\begin{equation}\label{eq:defphi}
    \Phi p := (p \Pi + e) \wedge \bar p.
\end{equation}
In~\eqref{eq:defphi} and below, all $n$-vectors are treated as row vectors.

\begin{remark}\label{r:extr}
    Limited liability and absolute priority cannot by themselves pin down
    outcomes for the extreme case where every firm in the network has zero
    operating cash.\footnote{For example, if $\psi$ is a stationary
        distribution for $\Pi$, $\lambda$ is a constant in $[0,1]$, $p =
    \lambda \psi$ and $\bar p = \psi$, then $e=0$ implies $\Phi p = (\lambda
    \psi \Pi + e) \wedge \bar p = (\lambda \psi) \wedge \psi = \lambda \psi = p$.
Since $\lambda$ was arbitrary in $[0,1]$, there is a continuum of equilibria.}
In what follows, we adopt the convention $p=0$ when $e=0$. This is natural if
firms cannot raise new loans to meet their liabilities when $e=0$,
    since a payment sequence cannot be initiated without outside capital.
\end{remark}

In \cite{eisenberg2001systemic}, the \emph{risk orbit} of a given node $i$ is
the set of all nodes that are accessible from $i$.  The financial system is
called \emph{regular} if the risk orbit of every node contains at least one
$j$ with $e(j)>0$.  A system where regularity fails is shown below.  Suppose
$e(1)=1$ and $e(2)= e(3) =0$.  Arrows represent nonzero liabilities.  The risk
orbit from node
2 is $\{2, 3\}$. Regularity fails because $e(2) =e(3) = 0$.  

\begin{figure}[h]
    \centering
    \scalebox{0.85}{\includegraphics[clip=true, trim=0mm 20mm 0mm 20mm]{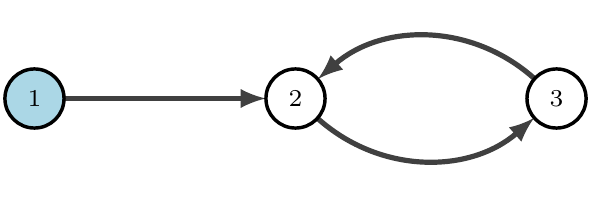}}
\end{figure}

Fortunately, regularity is irrelevant for uniqueness, 
  as the next theorem shows.

\begin{theorem}\label{t:clearbk}
    Every financial system has exactly one clearing payment vector.
\end{theorem}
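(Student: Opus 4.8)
The plan is to work with the map $\Phi p = (p\Pi + e)\wedge \bar p$ on the order interval $[0,\bar p]$. Existence of a fixed point is standard: $\Phi$ is order-preserving and maps $[0,\bar p]$ into itself, so Tarski's fixed point theorem applies, and in fact the set of fixed points has a greatest element $p^*$ and a least element $p_*$ (obtained as the limits of $\Phi^k \bar p \downarrow$ and $\Phi^k 0 \uparrow$ respectively). The whole content of the theorem is therefore to show $p_* = p^*$. Assuming $e \neq 0$ (the case $e=0$ is handled by the convention in Remark~\ref{r:extr}), I would set $\delta := p^* - p_* \geq 0$ and aim to prove $\delta = 0$.

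The key idea is to exploit the structure of where the minimum in $\Phi$ is attained at the two extreme fixed points. Partition $I$ according to the greatest fixed point: let $D := \setntn{j}{p^*(j) = \bar p(j)}$ (the ``defaulting'' nodes, where the $\bar p$ branch is active) and $S := I \setminus D$, on which $p^*(j) = (p^*\Pi)(j) + e(j) < \bar p(j)$. On $S$ the map $\Phi$ acts linearly. Since $p_* \leq p^*$, for $j \in S$ we have $p_*(j) \leq \bar p(j)$, and because $p_*$ is a fixed point, $p_*(j) = \min\{(p_*\Pi)(j)+e(j),\ \bar p(j)\}$; I want to argue that on $S$ the affine branch is the one selected for $p_*$ as well — this needs a small argument, but it follows because if $p_*(j)=\bar p(j)$ for some $j \in S$ then $p_*(j)=p^*(j)$ there, which one can feed back through the iteration. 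Granting this, on $S$ we get $\delta(j) = (\delta\Pi)(j)$, i.e. $\delta$ restricted to $S$ satisfies $\delta_S = \delta_S \,\Pi_{SS} + \delta_D\, \Pi_{DS}$, while on $D$ we have $\delta(j) = p^*(j) - p_*(j) = \bar p(j) - p_*(j) \geq 0$ with no immediate constraint.

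Now I would bring in a conservation/accounting argument, which is the heart of the proof and the step I expect to be the main obstacle. Summing the fixed-point identity for $p^*$ against a suitable test vector — or more simply, using that $\Pi$ is a Markov matrix so $\Pi \mathbf{1} = \mathbf{1}$ — one shows that total payments made equal total cash injected plus total payments received, and comparing this identity for $p^*$ and $p_*$ forces, for every node where $p^*$ strictly exceeds $p_*$, a ``leakage'' of value that must be absorbed somewhere. Concretely: let $T := \setntn{j}{\delta(j) > 0}$; from $\delta_S = \delta_S\Pi_{SS} + \delta_D\Pi_{DS}$ and $\delta \geq 0$, iterating gives $\delta_S = \delta_D \sum_{k\geq 0}\Pi_{DS}\Pi_{SS}^k$, so $\sum_{j\in S}\delta(j) = \sum_{i \in D}\delta(i)\big[\sum_{k\ge0}\Pi_{SS}^k\Pi_{S}\mathbf 1_S \text{-type sum}\big]$; meanwhile on $D$, $\delta(i) = \bar p(i) - p_*(i)$ and $p_*(i) = (p_*\Pi)(i)+e(i)$ would need the affine branch, but $p_*(i) \leq \bar p(i) = p^*(i)$ and at the same time $p^*(i)=\bar p(i)$ means the affine value at $p^*$ is $\geq \bar p(i)$, i.e. $(p^*\Pi)(i) + e(i) \geq \bar p(i)$. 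The strict inequality $\bar p \gg 0$ together with $e \neq 0$ should then give a contradiction unless $T = \varnothing$: roughly, value cannot circulate forever in the subnetwork $T$ without touching a node with positive external cash, and the regularity-free version of this is exactly that the Markov chain on $T$ must leak probability out of $T$ (because $e \gg 0$ somewhere and $\bar p \gg 0$ everywhere) — any closed recurrent class inside $T$ would, as in the footnote to Remark~\ref{r:extr}, create the pathological continuum, which is ruled out precisely because $e \neq 0$ and the chain on such a class would have to ignore $e$.

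To make the last paragraph rigorous I would, I think, avoid case analysis on recurrent classes and instead argue by a minimality/extremality trick: suppose $\delta \neq 0$ and consider $\lambda^* := \max\setntn{\lambda \in [0,1]}{p^* - \lambda \delta \text{ is a fixed point of }\Phi}$; since $p_*$ corresponds to $\lambda = 1$ this set is nonempty and closed (fixed-point set is closed), and one shows $p^* - \lambda^*\delta$ is still a fixed point, then derives a contradiction with maximality by showing $\Phi$ lets one push $\lambda$ slightly past $\lambda^*$ — using that on the set where $p^* - \lambda^*\delta$ agrees with $p^*$ the affine branch has slack, and on its complement $\delta$ is governed by the strictly-substochastic-on-$S$ relation so a small perturbation stays below $\bar p$. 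This convexity-of-the-fixed-point-segment idea is cleaner than the accounting computation, so I would present it as the main line, keeping the Markov/leakage observation ($\Pi$ substochastic off the defaulting set combined with $e \neq 0$) as the one nontrivial input. Finally I would note the $e = 0$ case is dispatched by the stated convention, completing the proof.
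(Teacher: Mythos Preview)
Your strategy of comparing the greatest and least Tarski fixed points $p^*,p_*$ and studying the disagreement set $T=\{j:\delta(j)>0\}$ with $\delta=p^*-p_*$ is a legitimate alternative to the paper's proof, which instead decomposes $I$ \emph{structurally}---into the cash-accessible set $P$, a transient set $A$, and a residual absorbing set $N$---and then invokes a concave monotone fixed point theorem (Du) on $P$ via Lemma~\ref{l:cfpc}. If your route worked it would be more elementary, since it would bypass that fixed point machinery entirely.

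However, the proposal has a real gap at the decisive step. Your accounting argument, carried out carefully, does yield $\delta\leq\delta\Pi$ pointwise (for every $j$ either $p_*(j)=\bar p(j)$, forcing $\delta(j)=0$, or $p_*(j)=(p_*\Pi)(j)+e(j)$ while $p^*(j)\leq(p^*\Pi)(j)+e(j)$); summing and using that $\Pi$ is Markov then gives $\delta=\delta\Pi$, which forces $T$ to be absorbing, $e|_T=0$, and $p^*(i)\Pi(i,j)=0$ for $i\notin T$, $j\in T$. But you then assert this is ``ruled out precisely because $e\neq0$.'' That is wrong: the global condition $e\neq0$ does \emph{not} preclude an absorbing subset on which $e$ vanishes---the paper's set $N$ is exactly such a subset and is generically nonempty. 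What actually closes the argument is not $e\neq0$ but the convention of Remark~\ref{r:extr}: the facts just derived make $(T,\Pi|_T,\bar p|_T,0)$ an independent sub-financial-system with zero cash, $p^*|_T$ is a clearing vector for it, and the convention then forces $p^*|_T=0$, contradicting $\delta|_T>0$. You never invoke the convention here; you cite the wrong reason.

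Separately, the ``convexity-of-the-fixed-point-segment'' alternative does not work as stated. Since $p_*=p^*-1\cdot\delta$ is already a fixed point, your $\lambda^*$ equals $1$, and ``pushing slightly past $\lambda^*$'' would mean exhibiting a fixed point strictly below the least fixed point $p_*$. Concavity of $\Phi$ only gives $\Phi q\geq q$ for $q$ on the segment $[p_*,p^*]$, not $\Phi q=q$, and supplies no mechanism for extending beyond $\lambda=1$; moreover, for any $j\in T$ with $p_*(j)=0$ the point $p^*-\lambda\delta$ leaves $[0,\bar p]$ for every $\lambda>1$. This line should be dropped; the accounting argument, corrected as above to invoke the convention on the sub-system $T$, is the one that can be completed.
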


To prove Theorem~\ref{t:clearbk}, we begin with a lemma.
In the lemma, we say that node $j$ in a financial system $(I, \Pi, \bar p, e)$,
is \emph{cash accessible} if there exists an $i \in I$ such that
$e(i) > 0$ and $j$ is accessible from $i$.

\begin{lemma}\label{l:cfpc}
    If every node in $S = (I, \Pi, \bar p, e)$ is cash accessible, then $S$ has
    a unique clearing vector $p^*$.  Moreover,  $p^* \gg 0$ and
    $\Phi^k p \to p^*$ as $k \to \infty$ when $0 \leq p \leq \bar p$.
\end{lemma}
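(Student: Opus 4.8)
The plan is to exploit the structure of $\Phi p = (p\Pi + e) \wedge \bar p$ as an order-preserving, order-continuous self-map of the complete lattice $[0,\bar p]$, and then upgrade the standard Tarski/Kleene existence facts to uniqueness by showing that $\Phi$ is eventually a contraction in a suitable sense — or, more cleanly, by proving that $\Phi$ restricted to $[0,\bar p]$ has a least and greatest fixed point that coincide. First I would verify the elementary facts: $\Phi$ maps $[0,\bar p]$ into itself (since $p\Pi + e \geq 0$ and the meet with $\bar p$ caps it), $\Phi$ is monotone (both $p \mapsto p\Pi + e$ and $p \mapsto p \wedge \bar p$ are monotone), and $\Phi$ is continuous along monotone sequences. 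Hence $\Phi^k 0 \uparrow \underline p$ and $\Phi^k \bar p \downarrow \overline p$, where $\underline p$ and $\overline p$ are the least and greatest fixed points in $[0,\bar p]$; any clearing vector lies between them. So the whole lemma reduces to showing $\underline p = \overline p$ and that this common value is strictly positive.

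For strict positivity, I would argue directly on $\underline p$. Let $J := \{j : \underline p(j) > 0\}$. Since every node is cash accessible, pick $i$ with $e(i) > 0$; then $(\Phi 0)(i) = e(i) \wedge \bar p(i) > 0$ using $\bar p \gg 0$, so $i \in J$. If $J \neq I$, take $j \notin J$ accessible in one step from some $i' \in J$, i.e.\ $\Pi(i', j) > 0$; then $\underline p(j) = (\Phi \underline p)(j) \geq (\underline p \Pi + e)(j) \wedge \bar p(j) \geq (\underline p(i')\Pi(i',j)) \wedge \bar p(j) > 0$, a contradiction. Walking along accessibility paths from the cash nodes, cash accessibility of every node forces $J = I$, i.e.\ $\underline p \gg 0$.

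The heart of the matter, and the step I expect to be the main obstacle, is uniqueness: showing $\overline p \leq \underline p$. Here the finite strict positivity just established is the lever. Consider the difference $d := \overline p - \underline p \geq 0$ and suppose $d \neq 0$. On the set $A := \{j : \overline p(j) = \bar p(j)\}$ of "capped" coordinates under $\overline p$, we have $\overline p(j) = \bar p(j) \geq \underline p(j)$ with equality forced wherever $\underline p$ is also capped; the delicate case is coordinates where $\overline p$ equals the inflow $(\overline p \Pi + e)(j)$. For those, $d(j) = \overline p(j) - \underline p(j) \leq (\overline p \Pi + e)(j) - \big((\underline p\Pi + e)(j)\wedge \bar p(j)\big) \leq (d\Pi)(j)$, since subtracting the $e$ terms cancels and the meet only decreases $\underline p$'s image. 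So $d \leq d\Pi$ holds coordinatewise on the non-capped set and $d(j) = 0 \leq (d\Pi)(j)$ trivially where $\overline p(j) = \bar p(j) = \underline p(j)$; the one genuinely new possibility, $\overline p(j) = \bar p(j) > \underline p(j)$, needs separate handling. To close it, I would restrict attention to the sub-network on the support of $d$ and use that $\Pi$ is a Markov matrix: iterating $d \leq d\Pi$ gives $d \leq d\Pi^k$ for all $k$, and since $\sum_j d(j)$ is preserved (not increased) by right-multiplication by the stochastic $\Pi$ while strict inequalities would have to appear somewhere along a path leading out to a coordinate that \emph{is} strictly capped (where $\overline p(j) = \bar p(j)$ forces a strict drop because $\underline p(j) < \bar p(j)$ means that coordinate absorbs inflow rather than passing it all on), one derives $\sum_j d(j) < \sum_j d(j)$, a contradiction — unless $d$ is supported on a closed communicating class with no cash, which cash accessibility rules out. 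Hence $d = 0$, so $\overline p = \underline p =: p^*$ is the unique fixed point, $p^* \gg 0$, and by monotone squeezing $\Phi^k 0 \uparrow p^*$ and $\Phi^k \bar p \downarrow p^*$; for arbitrary $p \in [0,\bar p]$, monotonicity gives $\Phi^k 0 \leq \Phi^k p \leq \Phi^k \bar p$, so $\Phi^k p \to p^*$. The technical care lies entirely in the bookkeeping of the capped/uncapped coordinates and in exhibiting the strict contraction of the $\ell^1$-mass using cash accessibility; I expect the rest to be routine.
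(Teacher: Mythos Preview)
Your route differs substantially from the paper's. The paper never compares least and greatest fixed points directly; instead it uses that $\Phi$ is \emph{concave} (an affine map followed by a coordinatewise minimum) and appeals to a Du-type fixed point theorem: any increasing concave self-map on $[0,\bar p]$ with $\Phi^m 0 \gg 0$ for some $m$ is globally stable. All of the paper's work goes into proving $\Phi^n 0 \gg 0$ via an explicit lower bound $\Phi^m 0 \geq \delta(\hat e + \hat e\Pi + \cdots + \hat e\Pi^{m-1})$, after which uniqueness and convergence come for free from concavity. Your plan avoids concavity entirely, which is a legitimate alternative, but it means you must supply the contraction mechanism by hand.

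Your strict-positivity argument for $\underline p$ is fine. The uniqueness step, however, has a real gap. From $d \leq d\Pi$ and stochasticity you correctly get $d = d\Pi$, so $D := \{j : d(j) > 0\}$ is absorbing. But your claimed ``strict drop'' at strictly capped coordinates need not occur: when $\overline p(j) = \bar p(j) > \underline p(j)$ one has $d(j) = (d\Pi)(j)$ exactly when $(\overline p\Pi + e)(j) = \bar p(j)$, and nothing you wrote rules this out. More seriously, you assert the alternative is ``$d$ supported on a closed class with no cash'' without justifying the ``no cash'' part, and even granting that, an absorbing $D$ alone does not contradict cash accessibility, since paths from cash nodes in $D^c$ \emph{into} $D$ are not excluded by $D$ being absorbing. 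The missing idea is to work with $\underline p$, not $d$: for every $j \in D$ you do have $\underline p(j) = (\underline p\Pi + e)(j)$ (since $\underline p(j) < \overline p(j) \leq \bar p(j)$); summing this over $j \in D$ and using that $\sum_{j \in D}\Pi(i,j) = 1$ for $i \in D$ gives
\[
0 \;=\; \sum_{i \notin D}\underline p(i)\sum_{j \in D}\Pi(i,j) \;+\; \sum_{j \in D} e(j),
\]
and now $\underline p \gg 0$ forces both $e|_D = 0$ and $\Pi(i,j) = 0$ for all $i \notin D$, $j \in D$. Thus $D$ is completely isolated with no cash, and cash accessibility of any $j \in D$ is genuinely contradicted. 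With this patch your argument goes through, though the paper's concavity route is shorter and delivers the convergence statement without the final sandwich.
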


\begin{proof}
    Let $S$ be as described. By the fixed point theorem in the appendix (Theorem~\ref{t:duext}),
    it suffices to show that $\Phi$ is an increasing concave self-map on
    $[0,\bar p]$ with $\Phi^k 0 \gg 0$ for some $k \in \NN$.
    As confirmed in \cite{eisenberg2001systemic}, $\Phi$ is increasing and concave,
    so only the last statement needs to be verified.  To this end, we set
    \begin{equation*}
        \delta 
        := \frac{1}{n^2} 
        \cdot \min 
        \left\{
            \setntn{\bar p(i)}{i \in I} \cup \setntn{e(i)}{i \in I \, \st e(i) > 0}
        \right\}.
    \end{equation*}
    Let $\hat e$ be defined by $\hat e(i) = 1$ if $e(i) > 0$ and zero
    otherwise.  We claim that, for all $m \leq n$, 
    \begin{equation}\label{eq:phikprop}
        \Phi^m 0 
        \geq \delta (\hat e + \hat e \Pi + \cdots + \hat e \Pi^{m-1}).
    \end{equation}
    This holds at $m=1$ because $\Phi 0 = e
    \wedge \bar p \geq \delta \hat e$.  Now suppose \eqref{eq:phikprop} holds at some $m \leq n
    - 1$.  Then, since $\Phi$ is increasing, we obtain
    \begin{align*}
        \Phi^{m+1} 0 
        & \geq (\delta (\hat e + \hat e \Pi + \cdots + \hat e \Pi^{m-1}) \Pi + e) \wedge \bar p
        \\
        & \geq (\delta (\hat e + \hat e \Pi + \cdots + \hat e \Pi^m) ) \wedge \bar p
    \end{align*}
    Since $\hat e + \hat e \Pi + \cdots + \hat e \Pi^m \leq n^2 \1$, where
    $\1$ is a vector of ones, and since $(\delta n^2 \1) \leq \bar p$
    by the definition of $\delta$, we have 
    $\Phi^{m+1} 0 \geq \delta (\hat e + \hat e \Pi + \cdots + \hat e \Pi^m)$.
    This argument confirms that \eqref{eq:phikprop} holds for all $m \leq n$.

    We now claim that $\Phi^n 0 \gg 0$.  In view of~\eqref{eq:phikprop}, it suffices
    to show that, for any $j \in I$, there exists a $k < n$ with 
        $(\hat e \Pi^k) (j) 
        = \sum_{i \in I} \hat e(i) \Pi^k(i, j) > 0$.
    Since every node in $S$ is cash accessible, we know there exists an $i
    \in I$ with $e(i) > 0$ and $j$ is accessible from $i$.  For this $i$ we can 
    choose $k \in \NN$ with $k < n$ and $\Pi^k(i, j) =  \hat e(i) \Pi^k(i, j) > 0$.  
    We conclude that $\Phi^n 0 \gg 0$, as claimed.
\end{proof}

In what follows, a subset $\JJ$ of $I$ is called \emph{absorbing} if no
element of its complement $\JJ^c := \setntn{i \in I}{i \notin \JJ}$ is
accessible from $\JJ$.  Also, for a given vector $v$ on $I$ and some $\JJ
\subset I$, we write $v \given \JJ$ for the restriction of $v$ to $\JJ$.
For matrix $M$ on $I \times I$ we write $M \given \JJ$ for the restriction
of $M$ to $\JJ \times \JJ$. 

\begin{proof}[Proof of Theorem~\ref{t:clearbk}]
    As pointed out by \cite{eisenberg2001systemic}, the operator $\Phi$ is
    increasing and concave on $[0, \bar p]$.  It follows from the
    increasing property and Tarski's fixed point theorem that at least one
    clearing vector always exists.\footnote{In fact $\Phi$ is a continuous
        self-map on the convex compact set $[0, \bar p]$, so 
    Brouwer's fixed point theorem gives the same conclusion.}
    The remainder of the proof focuses on uniqueness.

    Let $P$ be the set of all nodes in $I$
    that are cash accessible.   Let $A$ be all $i$ in $P^c$ such that
    $P$ is accessible from $i$.  Let $N$ be all $i$ in $P^c$ such that
    $P$ is not accessible from $i$.  Note that $I = P \cup A \cup
    N$ and that these sets are disjoint.

    The set $N$ is an absorbing set, since, by definition, $P$ is not
    accessible from $N$, and  $A$ cannot be accessible because otherwise
    $P$ would also be accessible.  The set $P$ is also absorbing because, if $j \in P^c$ is
    accessible from some $i \in P$, then $j$ is cash accessible.  But
    then $j \in P$, which is a contradiction.

    Iterating $k$ times on \eqref{eq:limli} gives
    \begin{equation}\label{eq:iterlimli}
        p \leq e + e \Pi + e \Pi^2 + \cdots + e \Pi^{k-1} + p \Pi^k .
    \end{equation}
    If $j$ is in $A$, then $j$ is not cash accessible, so $e(j) = 0$ and
    $(e \Pi^m) (j)  = 0$ for all $m$.  
    Hence~\eqref{eq:iterlimli} reduces to 
    \begin{equation}\label{eq:pjtoz}
        p(j) 
        \leq (p \Pi^k)(j) 
        \leq (\bar p \Pi^k)(j) 
        = \sum_{i \in I} \bar p(i) \Pi^k(i, j)
        = \sum_{i \in A} \bar p(i) \Pi^k(i, j).
    \end{equation}
    The last equality uses the fact that both $P$ and $N$ are absorbing
    sets.  Since $P$ is accessible from every element of $A$, when $i, j
    \in A$ we have $\Pi^k(i, j) \to 0$ as $k \to \infty$.\footnote{A Markov
        chain started at $i \in A$ leaves for $P$ with $\epsilon > 0$
        probability every $n = |I|$ periods.  Since $P$ is absorbing it
        never returns.  Hence the probability that
        the chain hits $j \in A$ after $k$ periods converges to zero with
        $k$.}   By taking $k$
    large in \eqref{eq:pjtoz}, we see that $p(j) = 0$ for all $j \in A$.

    Since $N$ is absorbing, $\Pi \given N$ is a Markov matrix and 
    $(N, \Pi \given N, \bar p \given N, e \given N)$ is itself an
    independent financial system.\footnote{Although nodes in $N$ may have
        inbound links from $A$, we have just shown that payments from
        $A$ are zero.  At the same time, there are no inbound links from
    $P$, since $P$ is aborbing.}
        Moreover, $P$ contains all cash accessible nodes so no
    element of $N$ is cash accessible and, in particular, $e \given N =
    0$. Hence $p \given N = 0$ by the convention in Remark~\ref{r:extr}.

    It remains only to treat nodes in $P$.  
    Since $P$ is absorbing, $\Pi \given P$ is a Markov matrix and 
    $(P, \Pi \given P, \bar p \given P, e \given P)$ is also an
    independent financial system.\footnote{Again, while nodes in $P$ may have
        inbound links from $A$, we have just shown that payments from
        $A$ are zero.}  Moreover, every node in $P$ is cash accessible,
        so, by Lemma~\ref{l:cfpc}, a unique clearing vector $p^* \gg
        0$ exists on $P$.  After extending to all nodes by setting $p^*(i) =
        0$ for all $i \notin P$, we have a unique clearing vector.
\end{proof}

\appendix

\section{Remaining Proofs}

In what follows, a self-map $F$ from a
subset $Y$ of $\RR^d$ to itself is called \emph{globally stable} if $F$ has a
unique fixed point $\bar y$ in $Y$ and $F^m y \to \bar y$ as $m \to \infty$
for all $y \in Y$.  We use the following fixed point theorem, which slightly
modifies Theorem~3.1 of~\cite{du1990}.  (See also Corollary~2.1.1. of
\cite{zhang2012variational}.)

\begin{theorem}\label{t:duext}
    Let $A$ be an increasing self-map on $[0, b] \subset \RR^d$.
    If $A$ is concave and there exists an $n \in \NN$ such that $A^n 0 \gg  0$,
    then $A$ is globally stable on $[0,b]$.
\end{theorem}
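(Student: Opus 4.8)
The plan is to adapt the classical argument for concave monotone operators (cf.\ \cite{du1990}), organizing it so that every orbit is squeezed between the successive approximations started from the two endpoints $0$ and $b$. Since $A$ is a self-map of $[0,b]$ we have $A0 \geq 0$ and $Ab \leq b$, so, as $A$ is increasing, the sequence $x_k := A^k 0$ is nondecreasing while $w_k := A^k b$ is nonincreasing; both lie in the compact set $[0,b]$, so $x_k \uparrow \bar y$ and $w_k \downarrow \bar w$ for some $\bar y \leq \bar w$ in $[0,b]$, and moreover $\bar y \geq x_n = A^n 0 \gg 0$.

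The first substantive step is to show that $\bar y$ is a fixed point of $A$; this is the one place where concavity (rather than mere monotonicity) is needed, since $A$ is not assumed continuous on the boundary of $[0,b]$. Monotonicity gives $A\bar y \geq A x_k = x_{k+1}$ for all $k$, hence $A\bar y \geq \bar y$. For the reverse inequality I would fix $\epsilon \in (0,1)$ and use that each coordinate of $x_k$ increases to the corresponding (strictly positive) coordinate of $\bar y$, so that $x_k \geq (1-\epsilon)\bar y$ for all large $k$; then monotonicity gives $A((1-\epsilon)\bar y) \leq A x_k = x_{k+1} \leq \bar y$, while concavity together with $A0 \geq 0$ gives $A((1-\epsilon)\bar y) \geq (1-\epsilon) A\bar y$. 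Combining these and letting $\epsilon \downarrow 0$ yields $A\bar y \leq \bar y$, hence $A\bar y = \bar y$. In particular $w_k = A^k b \geq A^k \bar y = \bar y$, so $\bar w \geq \bar y \gg 0$.

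Next I would show $\bar w = \bar y$, so that the squeeze collapses to a single point. Since $\bar y \gg 0$ and $b$ has finite coordinates, the number $t_0 := \sup \setntn{t \in [0,1]}{tb \leq \bar y}$ is strictly positive and satisfies $t_0 b \leq \bar y$. An induction on $k$, using at each step that $A$ is increasing and concave with $A0 \geq 0$, gives $A^k(t_0 b) \geq t_0 A^k b + (1-t_0) A^k 0 = t_0 w_k + (1-t_0) x_k$; on the other hand $A^k(t_0 b) \leq A^k \bar y = \bar y$. Hence $t_0 w_k + (1-t_0) x_k \leq \bar y$, so $w_k \leq t_0^{-1}(\bar y - (1-t_0)x_k)$, and since $x_k \to \bar y$ the right-hand side converges to $\bar y$; thus $\bar w \leq \bar y$, which together with $\bar w \geq \bar y$ gives $\bar w = \bar y$. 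Finally, for an arbitrary $y \in [0,b]$, monotonicity gives $x_k \leq A^k y \leq w_k$, whence $A^k y \to \bar y$; applying this with $y$ equal to any fixed point $\bar z$ gives $\bar z = A^k \bar z \to \bar y$, so $\bar z = \bar y$. Therefore $A$ is globally stable on $[0,b]$.

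I expect the main obstacle to be the second paragraph. A merely increasing self-map of $[0,b]$ can have $A^k 0$ converge to a point that is not fixed, so establishing $A\bar y = \bar y$ genuinely requires the $(1-\epsilon)$-perturbation argument, which in turn relies on $\bar y$ being strictly positive --- and that is exactly the role of the hypothesis $A^n 0 \gg 0$. Once $A\bar y = \bar y$ and $\bar y \gg 0$ are in hand, the iterated-concavity estimate in the third paragraph and the final squeeze are routine.
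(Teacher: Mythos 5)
Your proof is correct, but it takes a genuinely different route from the paper. The paper treats the hard analytic content as a black box: it cites Theorem~3.1 of Du (1990), which gives global stability for an increasing concave self-map $F$ of $[0,b]$ with $F0 \gg 0$, applies it to the composite operator $A^n$ (increasing and concave as a composition of such maps, with $A^n 0 \gg 0$), and then upgrades global stability of $A^n$ to global stability of $A$ by noting that the monotone sequence $A^m 0$ must converge to the limit of its subsequence $A^{mn}0$, similarly for $A^m b$, and squeezing $A^m v$ in between; existence of a fixed point of $A$ is taken from Tarski and uniqueness follows from $v = A^m v \to \bar v$. You instead reprove the underlying result from scratch, working directly with $A$ rather than $A^n$: the monotone limits $\bar y = \lim A^k 0$ and $\bar w = \lim A^k b$, the $(1-\epsilon)$-perturbation plus concavity to show $A\bar y = \bar y$ without any continuity assumption, and the $t_0$-scaling estimate $A^k(t_0 b) \geq t_0 A^k b + (1-t_0)A^k 0$ together with $A^k(t_0 b) \leq \bar y$ to collapse $\bar w$ onto $\bar y$ --- which is essentially the content of Du's theorem itself, here obtained in the slightly more general form with $A^n 0 \gg 0$ in place of $A0 \gg 0$. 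Each step checks out (the convex combinations stay in $[0,b]$, the induction in the scaling estimate only uses monotonicity and concavity, and existence of the fixed point comes for free from $\bar y$, so Tarski is not needed). What the paper's argument buys is brevity and reliance on a citable result; what yours buys is a self-contained, elementary proof that makes transparent exactly where concavity and the positivity hypothesis enter --- though, as a small point of commentary, strict positivity of $\bar y$ is not actually needed for the step $A\bar y = \bar y$ (that argument works coordinatewise even when $\bar y(i)=0$); where $\bar y \gg 0$ is genuinely indispensable is in your third paragraph, since $t_0 > 0$ fails otherwise, and with it uniqueness (consider $A$ the identity).
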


\begin{proof}
    Theorem~3.1 of~\cite{du1990} implies that any increasing concave
    self-map $F$ on $[0, b]$ satisfying $F 0 \gg 0$ is globally stable.
    Since compositions of increasing concave operators are increasing and
    concave, this implies that $A^n$ is globally stable on $[0, b]$.
    Denote its fixed point by $\bar v$.
    Since $\{ A^m 0 \}_{m \in \NN}$ is monotone increasing and since the subsequence
    $\{A^{mn} 0\}_{m \in \NN}$ converges up to $\bar v$ as $m \to \infty$, we must have
    $A^m 0 \to \bar v$.  A similar argument gives $A^m b \to \bar
    v$.  For any $v \in [0, b]$ we have $A^m 0 \leq A^m v \leq A^m b$, so $A^m v
    \to \bar v$ as $m \to \infty$.

    The last step is to show that $\bar v$ is the unique fixed point of $A$.
    From Tarski's fixed point theorem we know that at least one fixed point
    exists.  Now suppose $v \in [0, b]$ is such a point.  Then $v = A^m v$ for
    all $m$.  At the same time, $A^m v \to \bar v$ by the results just
    established.  Hence $v = \bar v$.  The proof is now complete.
\end{proof}

\bibliographystyle{ecta}

\bibliography{fr}

\end{document}